\documentclass[conference]{IEEEtran}
\usepackage{amsmath,amssymb,amsthm}
\usepackage{cite}
\usepackage{hyperref}
\usepackage{graphicx}
\usepackage{algorithmic}
\usepackage{textcomp}
\usepackage{booktabs}
\usepackage{multirow}
\usepackage{pgfplots}
\pgfplotsset{compat=1.18}
\usepackage{tikz}
\usetikzlibrary{arrows.meta, positioning, calc, decorations.pathreplacing}

\newtheorem{theorem}{Theorem}
\newtheorem{lemma}[theorem]{Lemma}
\newtheorem{corollary}[theorem]{Corollary}
\newtheorem{definition}{Definition}
\newtheorem{proposition}{Proposition}
\newtheorem{remark}{Remark}
\newtheorem{axiom}{Axiom}

\hypersetup{
    colorlinks=true,
    linkcolor=black,
    citecolor=black,
    urlcolor=blue
}

\title{A Formal Framework for Critical-Mass Collapse in Online Multiplayer Games}

\author{
    \IEEEauthorblockN{Ahmed Sheta, FRSA}
    \IEEEauthorblockA{
        Department of Computer Science\\
        Georgia Institute of Technology
    }
}

\begin{document}
\maketitle

\begin{abstract}
Online multiplayer games are population-dependent systems whose playability depends on the continued presence of an active player base. We propose a formal framework for reasoning about viability collapse in such systems under explicit scope conditions. The framework introduces a conditional Critical Mass Threshold $\Phi$, below which queue times, match quality, or role balance render a game operationally non-viable under a fixed operational profile; an uninhabited runtime taxonomy spanning pre-launch and post-decline states; and a Nostalgia Inversion Point $\psi$, at which cultural memory exceeds active participation. We model post-peak decline using a threshold-sensitive hazard model and show how games in the modeled class can cross below viability under finite official-service horizons or bounded novelty under continuing exposure. Case studies based on public concurrent-player data are used illustratively rather than as formal validation. The contribution of the paper is not a universal law, but a formal vocabulary, a collapse model, and an empirical agenda for studying online game decline, preservation risk, and uninhabited virtual worlds.
\end{abstract}

\begin{IEEEkeywords}
online games, multiplayer games, population dynamics, network effects, digital preservation, critical mass, matchmaking, survival analysis
\end{IEEEkeywords}

%-------------------------------------------------------
\section{Introduction}
\label{sec:intro}

Online multiplayer games are unusual digital artifacts because their value is not contained solely in executable code or audiovisual assets. A multiplayer title remains playable only while enough players are present to sustain its core interaction loop through matchmaking, role balance, social availability, and perceived world activity. Preserving software binaries is therefore not equivalent to preserving the lived system. A game can remain technically runnable while becoming socially non-viable.

This paper studies that transition. We propose a formal framework for reasoning about how online multiplayer games move from active population states to operational decline, and how post-decline runtime states relate to questions of cultural memory and preservation.

The motivating phenomenon is widespread. The Video Game History Foundation's 2023 report found that only 13\% of classic video games released in the United States remained commercially available, implying that 87\% were out of print \cite{b_vghf}. The Digital Preservation Coalition reclassified shut-down video games as ``Practically Extinct'' in 2023 \cite{b_dpc}. Preservation campaigns such as the European Citizens' Initiative \textit{Stop Destroying Videogames} further underline the scope of the problem \cite{b_stopdestroyinggames}. Prior work on player behavior, network effects, churn, and online-community collapse has typically treated one component of the lifecycle at a time rather than integrating adoption, post-peak decline, matchmaking viability, and post-service states into a single framework.

We use the term \textit{digital thanatology} as a descriptive label for this line of inquiry, but the paper's contribution is narrower and more technical. We do not claim a universal law of decline. Instead, we present a scope-limited formal framework whose claims are conditional on explicit assumptions and whose empirical validation remains open.

Our contributions are fourfold. First, we formalize a conditional \textit{Critical Mass Threshold} $\Phi$ as an operational viability boundary under a fixed matchmaking profile. Second, we present a threshold-sensitive population-decline model that captures self-reinforcing sub-critical collapse under explicit assumptions. Third, we define an uninhabited runtime taxonomy spanning pre-launch and post-decline states. Fourth, we link decline to preservation questions through the concepts of a Nostalgia Inversion Point $\psi$ and a Preservation Window $\mathcal{W}$.

\textit{What is new and what is not.} The conditional viability formulation of $\Phi$, the integrated uninhabited runtime taxonomy, the Ghost Machinery state $\Omega_0$, and the explicit treatment of nostalgia and preservation as lifecycle concepts are original to this paper. The Weibull engagement literature \cite{b_bauckhage2012}, diffusion-model literature \cite{b_bass1969}, network-effects literature \cite{b_metcalfe2013,b_shankar2003}, and collapse analogies from online communities and ecology are adopted as supporting foundations rather than presented as originating here.

The framework is formal where explicit modeling buys clarity and deliberately modest where evidence is currently limited. The case studies in this paper are illustrative rather than validating. Titles with strong user-generated content ecosystems, private-server continuation, bot substitution, or major relaunches may fall partly or wholly outside the modeled class.

The paper proceeds as follows. Section~\ref{sec:foundations} establishes definitions, axioms, and scope limitations. Section~\ref{sec:population} develops the population dynamics model. Section~\ref{sec:ghost} introduces the uninhabited runtime taxonomy. Section~\ref{sec:inevitability} derives a conditional critical-mass crossing result. Section~\ref{sec:nostalgia} formalizes the Nostalgia Inversion Point. Section~\ref{sec:paradox} discusses preservation tension and the Preservation Window. Section~\ref{sec:empirical} provides illustrative case studies, and Section~\ref{sec:conclusion} discusses implications and limitations.

%-------------------------------------------------------
\section{Foundational Definitions and Axioms}
\label{sec:foundations}

We establish the formal objects of the theory.

\begin{definition}[Online Game System]
An \textbf{Online Game System} is a tuple $\mathcal{G} = (S, P, C, U, \mathcal{M})$ where:
\begin{itemize}
    \item $S$ is the \textbf{server infrastructure}: the set of computational resources hosting the game state;
    \item $P: \mathbb{R}_{\geq 0} \to \mathbb{Z}_{\geq 0}$ is the \textbf{population function}, mapping time $t$ to the count of active players;
    \item $C: \mathbb{R}_{\geq 0} \to \mathbb{R}_{\geq 0}$ is the \textbf{cumulative content function}, representing the total volume of designed interactive content available at time $t$;
    \item $U: \mathbb{R}_{\geq 0} \to \mathbb{R}_{\geq 0}$ is the \textbf{network utility function}, representing the social and interactive value derived from population size;
    \item $\mathcal{M}: \mathbb{R}_{\geq 0} \to \mathbb{R}_{\geq 0}$ is the \textbf{cultural memory function}, representing the number of individuals who retain experiential knowledge of $\mathcal{G}$ at time $t$.
\end{itemize}
\end{definition}

\begin{definition}[Active Player]
A player $p_i$ is \textbf{active} at time $t$ if $p_i$ has initiated at least one interactive session with $\mathcal{G}$ within the interval $[t - \delta, t]$, where $\delta$ is a game-specific \textbf{activity window} (heuristically, 30 days for persistent-world games, 7 days for session-based games; alternative values yield equivalent qualitative structure). This aligns with the Monthly Active User (MAU) and Weekly Active User (WAU) metrics standard in the games industry.
\end{definition}

\begin{definition}[Content Production Rate]
The \textbf{content production rate} is $c(t) = \frac{dC}{dt}$, representing the instantaneous rate of new content introduction via updates, expansions, patches, and community-generated modifications.
\end{definition}

We now state the axioms governing the system. These axioms are empirically motivated; justifications from the literature follow each statement.

\begin{axiom}[Finite Content Production]
\label{ax:finite}
For any game $\mathcal{G}$ with finite development resources, there exists a bound $c_{\max} < \infty$ such that $c(t) \leq c_{\max}$ for all $t$, and $\lim_{t \to \infty} c(t) = 0$.
\end{axiom}

\noindent\textit{Justification.} Studios work with finite budgets and finite headcount, and the rate at which they can produce new content is bounded by those constraints. Even titles with unusually long content pipelines, such as \textit{World of Warcraft}, exhibit gradually lengthening intervals between expansions as the years accumulate. Content production does not grow forever. It plateaus, then slows, and the axiom is meant to capture that plateau rather than to predict its exact timing.

\begin{axiom}[Monotonic Exposure]
\label{ax:exposure}
The cumulative player experience function $E(t) = \int_0^t P(\tau) \cdot \bar{h}(\tau) \, d\tau$, where $\bar{h}(\tau)$ is the mean session duration, is monotonically non-decreasing.
\end{axiom}

\noindent\textit{Justification.} Both $P(t)$ and $\bar{h}(t)$ are non-negative quantities, so their integral grows or holds steady but cannot shrink. Players accumulate experience in one direction. There is no mechanism in the system for a player to forget content they have already seen.

\begin{axiom}[Network-Dependent Utility]
\label{ax:network}
For a fixed operational profile, the network utility function takes the form $U(P)=\alpha P^{\beta}$ for constants $\alpha>0$ and $\beta>0$. The exponent $\beta$ is treated as an empirical parameter rather than a universal constant. In particular, $U(P) \to 0$ as $P \to 0$.
\end{axiom}

\noindent\textit{Justification.} Metcalfe-style quadratic scaling is a useful limiting case, but empirical work on game and network-product markets suggests that network effects vary over the product lifecycle and need not be exactly quadratic. We therefore use the power-law form as a flexible family that includes the classical case $\beta=2$ while allowing weaker or stronger effective scaling under different designs \cite{b_metcalfe2013,b_liu2015}.

\begin{axiom}[Official-Service Infrastructure Horizon]
\label{ax:infra}
For any \textit{official publisher-supported} server infrastructure $S$, and absent extraordinary intervention such as relicensing, community transfer, or relaunch, there exists a finite time horizon $T_S$ beyond which the official service is discontinued.
\end{axiom}

\noindent\textit{Justification.} Official live services operate under budget, staffing, legal, and strategic constraints. Historically, publishers do discontinue online services at finite times \cite{b_dpc,b_stopkillinggames}. This axiom does not claim that all executable forms of a game vanish at that moment, only that the official supported service need not persist indefinitely.

\subsection{Scope and Non-Claims}

It is worth being explicit about what this paper does not claim, because the most common way a framework of this kind fails is by quietly expanding its scope past what the formalism supports. No universal law is claimed here. Every formal result holds under the four axioms above together with the additional hypotheses stated with each result, and games or platforms that violate those axioms are simply outside the scope of the framework. The framework is not empirically validated in this paper. The case studies in Section~\ref{sec:empirical} are illustrations of the vocabulary rather than fits of the model, and rigorous curve-fitting with uncertainty quantification is left to future work.

The Critical Mass Threshold $\Phi$ deserves a particular caveat. It is not an intrinsic scalar constant of a game. Matchmaking viability depends on region, game mode, role constraints, skill-bucket distribution, party size, cross-play policy, and time-of-day effects, and the viable population in each of these slices can differ by orders of magnitude. The scalar $\Phi$ we use throughout the paper should be read as the effective threshold under a fixed operational profile. The full object is really a function $\Phi: \mathcal{O} \to \mathbb{R}_+$ over a space $\mathcal{O}$ of operational conditioning variables, and the scalar is a projection of that function onto a single number for the sake of tractability.

Several phenomena that materially affect game populations are outside the present scope. These include heterogeneous player subpopulations, inter-game competitive dynamics, bot and AI-driven population substitution, user-generated content that extends the effective content bound, and community-operated private-server resurrection. These are not minor edge cases. Each of them, treated seriously, would change the breadth of the formal results, and we flag them as present limitations rather than as cosmetic extensions. Finally, no measurement procedure is given for cultural value, and no formal cost model is provided for infrastructure beyond the qualitative assumptions stated in the axioms.

%-------------------------------------------------------
\section{Population Dynamics and the Critical Mass Threshold}
\label{sec:population}

\subsection{A Biphasic Population Model}

Population trajectories for online games, as far as the publicly available data shows, follow a characteristic two-phase shape. In the early part of a game's life the population grows roughly like a logistic curve or a Bass diffusion curve, reaching a peak and then beginning a long decline. We take this biphasic shape as the starting point for the formal model. The simplest version of it, and the one we adopt as the canonical form, pairs logistic growth with exponential decay:
\begin{equation}
P(t) =
\begin{cases}
\dfrac{K}{1 + e^{-r(t - t_0)}} & t \leq t_{\text{peak}} \\[10pt]
P_{\text{peak}} \cdot e^{-\mu \, (t - t_{\text{peak}})} & t > t_{\text{peak}}
\end{cases}
\label{eq:biphasic}
\end{equation}
where $K$ is carrying capacity, $r > 0$ the intrinsic growth rate, $t_0$ the adoption inflection point, $P_{\text{peak}} = P(t_{\text{peak}})$ the maximum population, and $\mu > 0$ the decay rate. The growth phase can equivalently be derived from the Bass diffusion model \cite{b_bass1969}: $\frac{dF}{dt} = [p + qF(t)][1 - F(t)]$, where $p$ and $q$ are innovation and imitation coefficients.

A more flexible decay form, which we use in empirical fits, is the Weibull survival function:
\begin{equation}
P(t) = P_{\text{peak}} \cdot \exp\!\left(-\left(\frac{t - t_{\text{peak}}}{\theta}\right)^{\!k}\right), \quad t > t_{\text{peak}}
\label{eq:weibull_decay}
\end{equation}
with scale parameter $\theta > 0$ and shape parameter $k > 0$. For $k = 1$, Eq.~\eqref{eq:weibull_decay} reduces to the exponential case of Eq.~\eqref{eq:biphasic} with $\mu = 1/\theta$. For $k > 1$ the hazard rate is increasing (decay accelerates over time); for $k < 1$ it is decreasing (early attrition dominates).

\begin{remark}[On the empirical status of the Weibull form]
Bauckhage, Sifa, Drachen, and colleagues \cite{b_bauckhage2012, b_sifa2014} demonstrated that the distribution of total playing times \textit{across individuals} fits a Weibull distribution across 3{,}000+ Steam titles. This is a result about individual-level engagement depth, how long a given player engages before disengaging, rather than a statement about the aggregate population $P(t)$ over calendar time. Under simplifying cohort-arrival assumptions, individual-level Weibull lifespans induce aggregate dynamics that we approximate with Eq.~\eqref{eq:weibull_decay}. A rigorous derivation of the aggregate form from individual-level lifespans is left to future work.
\end{remark}

\begin{definition}[Digital Half-Life]
The \textbf{digital half-life} of $\mathcal{G}$ is the time required for $P(t)$ to halve from its peak. For exponential decay, $\tau = (\ln 2)/\mu$. For Weibull decay, $\tau = \theta \cdot (\ln 2)^{1/k}$.
\end{definition}

\subsection{The Critical Mass Threshold and the Allee-Effect Analogy}

A central structural feature of many multiplayer populations is the existence of a population region in which lower participation degrades the quality of the interaction loop itself. In ecology, related threshold phenomena are often discussed under the label of the Allee effect \cite{b_stephens1999}. We use that literature as an analogy rather than as a one-to-one identification: the mechanism in games is not biological reproduction but operational degradation through queue times, skill mismatch, role scarcity, or social inactivity.

\begin{definition}[Critical Mass Threshold]
The \textbf{Critical Mass Threshold} $\Phi$ is the minimum population level at which the game's core interactive loop remains functionally viable. Formally:
\begin{equation}
\Phi =
\inf \left\{
P^* \in \mathbb{Z}_{>0}
\;\middle|\;
\begin{array}{l}
\forall\, P(t)\ge P^*,\\
Q(t)\le Q_{\max},\\
M_b(t)\le M_{\max}
\end{array}
\right\}
\end{equation}
where $Q(t)$ is the expected matchmaking queue time, $M_b(t)$ is the expected match quality imbalance, and $Q_{\max}$, $M_{\max}$ are genre-specific tolerance thresholds.
\end{definition}

\begin{remark}[$\Phi$ is conditional, not intrinsic]
As noted in Section~\ref{sec:foundations}, the scalar $\Phi$ is a projection of a higher-dimensional viability boundary onto a single number under fixed operational conditions. Empirical estimates from the literature (e.g., van Dongen's 58{,}000 figure for 3v3 \cite{b_vandongen2014}) correspond to specific points in this conditioning space. Different regions, modes, or matchmaking configurations yield different effective $\Phi$ values for the same game. All formal results in this paper should be read as holding for a fixed operational profile.
\end{remark}

\begin{remark}[Empirical estimates of $\Phi$]
\label{prop:phi_empirical}
Reported estimates of viable population thresholds differ by orders of magnitude because they depend on mode count, region splitting, party constraints, skill partitioning, role requirements, and queue-time targets. For a 3v3 competitive game with region splitting, skill brackets, and mode selection, van Dongen \cite{b_vandongen2014} estimated a viable threshold of roughly 58{,}000 concurrent players, corresponding to around 5 million monthly active users. Microsoft Research's \textit{Ranking and Matchmaking} article \cite{b_trueskill} gives a representative threshold calculation of at least 8{,}000 simultaneous players under one stylized configuration. These figures should be read as operational-profile-specific estimates rather than as convergent measurements of a universal scalar threshold.
\end{remark}

\begin{proposition}[Existence of $\Phi$]
For any game $\mathcal{G}$ with player-versus-player or cooperative matchmaking mechanics, $\Phi > 0$ exists and is finite.
\end{proposition}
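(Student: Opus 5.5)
We view $\Phi$ as the infimum of the set
\[
A=\{P^*\in\mathbb{Z}_{>0} : \forall\,P(t)\ge P^*,\ Q(t)\le Q_{\max},\ M_b(t)\le M_{\max}\},
\]
and prove the two halves of the statement separately: \emph{finiteness} amounts to showing $A\neq\emptyset$, and \emph{positivity} is then essentially automatic. Throughout, the operational profile is fixed, as the conditional-$\Phi$ remark requires. Under a fixed profile, $Q$ and $M_b$ are functions of the population level alone: $Q(t)=q(P(t))$ and $M_b(t)=m(P(t))$.

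\textbf{Positivity.} Since $A\subseteq\mathbb{Z}_{>0}$, every element is at least $1$. So once $A$ is nonempty, its infimum is a minimum, lies in $A$, and satisfies $\Phi\ge 1>0$.

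A more substantive point is that small populations are genuinely excluded. A match in a PvP or cooperative mode needs at least $n\ge 2$ simultaneous participants, where $n$ is the match size. If $P<n$, no match can form, so $q(P)=\infty>Q_{\max}$. Hence every $P^*\in A$ satisfies $P^*\ge n$, giving $\Phi\ge n\ge 2$. This ties positivity to the matchmaking mechanics rather than to the integer convention. It is also consistent with Axiom~\ref{ax:network}, since $U(P)\to 0$ as $P\to 0$.

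\textbf{Finiteness.} This is the real content: we must exhibit some $P^*$ above which both tolerances hold. We would add two mild regularity hypotheses on the fixed profile and state them explicitly.
\begin{itemize}
\item[(i)] The queue time $q(P)$ is non-increasing in $P$ and tends to $0$ (or at least below $Q_{\max}$) as $P\to\infty$. The modelling rationale is that the arrival rate of compatible players into each mode, region, and skill bucket scales linearly with $P$, so the expected wait to fill a match of size $n$ scales like $O(n/(\lambda P))$ for per-player arrival rate $\lambda$.
\item[(ii)] The imbalance $m(P)$ is non-increasing in $P$ and tends to $0$ (or below $M_{\max}$). The rationale is that with more players in each skill bucket, the matchmaker can choose opponents within an arbitrarily narrow skill band.
\end{itemize}
Given (i) and (ii), choose $P_1$ with $q(P)\le Q_{\max}$ for all $P\ge P_1$, and $P_2$ with $m(P)\le M_{\max}$ for all $P\ge P_2$. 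Then $\max(P_1,P_2,n)\in A$, so $\Phi\le\max(P_1,P_2,n)<\infty$. Monotonicity matters here because the definition quantifies over \emph{all} $P(t)\ge P^*$, which requires an upward-closed good region and not merely a single good population level.

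\textbf{Main obstacle.} The difficulty is justifying (i) and (ii), since nothing in Axioms~\ref{ax:finite}--\ref{ax:infra} constrains $Q$ or $M_b$. The first thing we would try is a queueing argument: partition players into the finitely many profile cells (region $\times$ mode $\times$ skill bucket). Each cell receives a positive fraction of the population, so its arrival rate grows without bound with $P$, and the time to assemble $n$ compatible players goes to $0$. Finitely many cells let us take a maximum of the resulting thresholds.

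If strict monotonicity fails, for example because of sawtooth effects from discrete match sizes, we would instead use $\limsup_{P\to\infty} q(P)<Q_{\max}$ and $\limsup_{P\to\infty} m(P)<M_{\max}$ directly to obtain a tail threshold. We expect the proof in the paper to state these regularity conditions as implicit genre assumptions rather than derive them.
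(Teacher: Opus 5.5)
Your argument is correct under the hypotheses you add, but it takes a different route from the paper. The paper's proof posits the specific queueing law $Q(t)\sim 1/(P(t)\rho)$ and observes that $Q\to\infty$ as $P\to 0$. It then invokes continuity, monotonicity and the intermediate value theorem to obtain a unique population level at which $Q$ crosses $Q_{\max}$, and identifies that level with $\Phi$. Positivity is read off from the blow-up at $P=0$; finiteness is only implicit in the fact that $1/(P\rho)\to 0$.

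You instead observe that positivity is automatic once the defining set is nonempty, since it is a set of positive integers, and you sharpen this to $\Phi\ge n$ via the match size. You then place the real content in nonemptiness, which you derive from explicit tail conditions on both $q$ and $m$. This buys three things the paper's proof lacks:
\begin{itemize}
\item it handles the imbalance constraint $M_b\le M_{\max}$, which the paper's proof never addresses (it mentions role balance only as tightening the queue bound);
\item it respects the integer infimum and the ``for all $P(t)\ge P^*$'' quantifier, rather than equating $\Phi$ with a real-valued crossing point;
\item it needs no particular functional form, only conditions on the large-$P$ tail.
\end{itemize}
Indeed, monotonicity is redundant in your argument once you assume $q\to 0$ and $m\to 0$, or $\limsup$ bounds strictly below the tolerances, since either already gives an upward-closed tail.

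What the paper's route buys in return is a characterization rather than just bounds. With a continuous, strictly decreasing queue time, $\Phi$ is essentially the root of $q(P)=Q_{\max}$; under the inverse law this gives $\Phi\propto 1/(\rho Q_{\max})$, an interpretable dependence on the tolerance. Your argument only brackets $\Phi$ between $n$ and $\max(P_1,P_2,n)$. Your expectation about the paper is also borne out: it asserts the queueing law and its regularity rather than deriving them from the axioms, so your hypotheses (i)--(ii) make explicit what the paper's proof assumes.
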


\begin{proof}
The matchmaking queue time $Q(t)$ is inversely proportional to the rate of player arrivals into the matchmaking pool: $Q(t) \sim \frac{1}{P(t) \cdot \rho}$ where $\rho$ is the session initiation rate per player. As $P(t) \to 0$, $Q(t) \to \infty$, exceeding any finite $Q_{\max}$. Since $Q(t)$ is continuous and monotonically decreasing in $P(t)$, the intermediate value theorem guarantees a unique crossing at some $P^* = \Phi > 0$. In games with role-restricted queuing (e.g., tank-healer-DPS constraints in MOBAs), the effective bound is tighter still, because matchmaking requires balanced role distributions rather than arbitrary player counts.
\end{proof}

\subsection{The Departure Cascade}

Once the population drops below $\Phi$, the decline becomes self-reinforcing. The mechanism is straightforward to state in words. Fewer players in the matchmaking pool means longer queue times for the players who remain. Longer queues produce lower-quality matches, because the pool of available opponents is thinner. Lower match quality reduces the value each remaining player extracts from a session. Reduced per-session value increases the probability that a given player will stop logging in. Each departure shortens the pool further, and the loop repeats. This is the same qualitative structure that Garcia and colleagues documented for Friendster as cascading collapse in online communities \cite{b_garcia2013}, specialized here to the specific feedback mechanism of multiplayer matchmaking.

\begin{definition}[Sub-critical departure hazard]
The effective departure hazard when $P(t) < \Phi$ is:
\begin{equation}
d(t) = \alpha \cdot \left(\frac{\Phi}{P(t)}\right)^\gamma
\label{eq:departure}
\end{equation}
where $\alpha > 0$ is the baseline departure parameter and $\gamma > 1$ is the \textbf{cascade exponent}, capturing the super-linear acceleration of departures as population drops.
\end{definition}

\noindent\textit{Empirical motivation.} There are analogous cascades in the collapse of non-game online communities. Several studies of declining social networks find that early-stage departures are dominated by individual choices while late-stage departures are dominated by cascade effects, and that once the cascade phase begins, external interventions such as advertising and feature changes do not meaningfully slow the decline. Work in ecology on critical transitions, including Dakos and colleagues on early warning signals for tipping points \cite{b_dakos2014}, describes the same qualitative pattern in a different substrate. We are not claiming our specific functional form for the departure rate is derived from these studies. We are claiming that the general phenomenon of super-linear acceleration near collapse is empirically widespread enough to take seriously as a modeling assumption.

\begin{proposition}[Sub-critical finite-time collapse under super-linear departure]
\label{thm:cascade}
If $P(t_0) < \Phi$ at some time $t_0$, then $P(t) \to 0$ in finite time. That is, there exists $T_c < \infty$ such that $P(T_c) = 0$.
\end{proposition}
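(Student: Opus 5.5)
The plan is to read the departure hazard of Eq.~\eqref{eq:departure} as the per-capita exit rate in a continuous population model, solve the resulting ODE explicitly, and read off the extinction time. Concretely, on the sub-critical region $\{P < \Phi\}$ I take the dynamics to be
\begin{equation*}
\frac{dP}{dt} = -d(t)\,P(t) = -\alpha\,\Phi^{\gamma}\,P(t)^{1-\gamma}, \qquad P(t_0) = P_0 < \Phi,
\end{equation*}
with any arrivals either ignored or dominated by departures below $\Phi$. Without this modeling choice the proposition has no content, so I will state it explicitly as the hypothesis under which the claim holds. The first step is a trapping argument. Because the right-hand side is strictly negative for every $P \in (0,\Phi)$, the trajectory is strictly decreasing and can never climb back to $\Phi$. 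It therefore stays in the region where Eq.~\eqref{eq:departure} applies for as long as $P > 0$.

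The second step is to separate variables. Writing $P^{\gamma-1}\,dP = -\alpha\Phi^{\gamma}\,dt$ and integrating from $t_0$ gives
\begin{equation*}
P(t)^{\gamma} = P_0^{\gamma} - \gamma\,\alpha\,\Phi^{\gamma}\,(t - t_0).
\end{equation*}
The right-hand side is linear and decreasing in $t$, so it reaches zero at
\begin{equation*}
T_c = t_0 + \frac{1}{\gamma\alpha}\left(\frac{P_0}{\Phi}\right)^{\gamma} < \infty .
\end{equation*}
This yields $P(T_c) = 0$ and gives the claimed finite time. Equivalently, finiteness follows from the convergence of $\int_0^{P_0} P^{\gamma-1}\,dP$ whenever $\gamma > 0$. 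This shows that collapse is driven by the exponent structure $1-\gamma < 1$, which makes the right-hand side non-Lipschitz at $0$. I would remark that $\gamma > 1$ is more than is strictly needed for this computation, and that $\gamma \le 0$ would give only asymptotic decay.

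The main obstacle is rigor at $P = 0$ and on the integer-valued population. First, $P$ is $\mathbb{Z}_{\ge 0}$-valued in the definition of the online game system, so the ODE must be justified as a mean-field approximation. Alternatively, in the discrete picture, a per-capita hazard that stays bounded below by $\alpha(\Phi/P_0)^{\gamma}$ forces absorption at $0$ in finite time almost surely. Second, the vector field $-\alpha\Phi^\gamma P^{1-\gamma}$ blows up as $P \to 0^+$ when $\gamma > 1$. Uniqueness therefore holds only on $(0,\Phi)$, and I will define $T_c$ as the first hitting time of $0$, setting $P \equiv 0$ afterward.

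A more robust fallback avoids the exact ODE. Any departure hazard satisfying $d(t) \ge \alpha(\Phi/P)^{\gamma}$ gives $\tfrac{d}{dt}P^{\gamma} \le -\gamma\alpha\Phi^{\gamma}$ by comparison. This produces the same bound on $T_c$ even when there are additional departure or arrival terms, provided the arrivals are dominated by the departures.
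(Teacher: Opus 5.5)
Your argument is correct and is essentially the paper's proof: the paper also writes $dP/dt = -d(t)P(t) = -\alpha\Phi^{\gamma}P^{1-\gamma}$, separates variables, and obtains the same extinction time $T_c = t_0 + P(t_0)^{\gamma}/(\gamma\alpha\Phi^{\gamma})$. Your additions make explicit what the paper leaves implicit without changing the route: the trapping argument keeping the trajectory in $\{P<\Phi\}$, where the hazard formula applies; defining $T_c$ as a first hitting time; and the observation that any $\gamma>0$ suffices.
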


\begin{proof}
We work through this slowly, because the argument is the first place where the framework does real mathematical work, and the reader should be able to follow it without guessing.

We start from the definition of the departure rate. When the population is below $\Phi$, each player is leaving at an instantaneous rate $d(t) = \alpha (\Phi/P(t))^\gamma$. The total rate of change of the population is the negative of this rate times the number of players still present:
\begin{equation}
\frac{dP}{dt} \;=\; -\,d(t) \, P(t) \;=\; -\alpha \, \Phi^{\gamma} \, P(t)^{\,1-\gamma}.
\end{equation}
The exponent $1-\gamma$ is negative because we assumed $\gamma > 1$. That means $P^{\,1-\gamma}$ gets \emph{larger} as $P$ gets smaller, which is the feedback we care about.

We now solve this equation by separation of variables, which is the simplest tool in the ODE toolbox. We multiply both sides by $P^{\,\gamma-1}$ and by $dt$:
\begin{equation}
P^{\,\gamma-1} \, dP \;=\; -\alpha \, \Phi^{\gamma} \, dt.
\end{equation}
Now we integrate both sides from the moment the population first crosses $\Phi$, which we call $t_0$, to some later time $t$. The left side integrates as a power function. The right side is a constant in $t$, so it integrates to a linear function of $t$:
\begin{equation}
\frac{P(t)^{\gamma} - P(t_0)^{\gamma}}{\gamma} \;=\; -\alpha \, \Phi^{\gamma} \, (t - t_0).
\end{equation}
We want to find the time $T_c$ at which $P(T_c) = 0$. Plugging in and rearranging:
\begin{equation}
T_c \;=\; t_0 \,+\, \frac{P(t_0)^{\gamma}}{\gamma \, \alpha \, \Phi^{\gamma}}.
\end{equation}
The right side is a finite positive number, because every quantity in it is finite and positive. So $T_c$ is a finite time, not an infinite one. That is the content of the proposition.

A short remark on why this matters. If we had used ordinary exponential decay, with $dP/dt = -\mu P$, the population would approach zero but never reach it in any finite time. The super-linear feedback, the fact that departures accelerate as the population shrinks rather than merely continuing at a fixed rate, is what turns asymptotic decay into actual extinction at a definite moment. The cascade is not just faster than exponential decay. It is a qualitatively different kind of thing.
\end{proof}

\begin{remark}[Irreversibility of sub-critical decline]
Once $P(t) < \Phi$, recovery to $P(t) \geq \Phi$ requires an external intervention (e.g., free-to-play conversion, franchise reboot) that resets the growth dynamics of Eq.~\eqref{eq:biphasic}. Absent such intervention, decline is monotonic and terminal.
\end{remark}

\noindent\textit{Illustration.} \textit{Evolve} illustrates both the sub-critical collapse result and its corollary about recovery. After peaking at 27,403 concurrent players at launch (February 2015), the game decayed to hundreds, well below $\Phi$ for a 4v1 asymmetric multiplayer game. A free-to-play conversion in July 2016 constituted an exogenous intervention, resetting growth dynamics and producing a new peak of 50,953. When development ceased three months later, the second decay was even steeper, terminating at 10--15 persistent players (0.04\% of F2P peak) via peer-to-peer, a system requiring no matchmaking and hence $\Phi \approx 2$.

\subsection{Network Utility Amplification}

The same feedback that drives the departure cascade also drives a parallel collapse in the value that remaining players derive from the game. Chen \cite{b_chen2021} discusses this under the label of anti-network effects: the quadratic relationship that gives multiplayer games their explosive growth phase also gives them a correspondingly sharp value collapse during decline. The game becomes less valuable faster than it becomes less populated.

\begin{proposition}[Quadratic Value Collapse]
Under Axiom~\ref{ax:network}, the rate of network value loss satisfies:
\begin{equation}
\frac{dU}{dt} = \alpha \beta \, P^{\beta - 1} \cdot \frac{dP}{dt}
\end{equation}
For $\beta = 2$ (classical Metcalfe), $\frac{dU}{dt} = 2\alpha P \cdot \frac{dP}{dt}$, implying that value loss accelerates proportionally to the remaining population during decline. The game becomes less valuable faster than it becomes less populated.
\end{proposition}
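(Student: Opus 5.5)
The identity is a direct application of the chain rule to the composite $t \mapsto U(P(t))$. I will treat $P$ as a differentiable real-valued function of time, as the paper already does in the biphasic model of Eq.~\eqref{eq:biphasic} and in the ODE of Proposition~\ref{thm:cascade}, rather than as the integer-valued count in the definition of $\mathcal{G}$. Under Axiom~\ref{ax:network}, $U(P)=\alpha P^{\beta}$ is a power function. It is smooth on $P>0$, with derivative $U'(P)=\alpha\beta P^{\beta-1}$. The chain rule then gives $\frac{dU}{dt}=U'(P(t))\,\frac{dP}{dt}=\alpha\beta P^{\beta-1}\frac{dP}{dt}$ at every $t$ where $P(t)>0$ and $P$ is differentiable. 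That is the displayed formula.

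The $\beta=2$ case follows by substitution, giving $\frac{dU}{dt}=2\alpha P\,\frac{dP}{dt}$. To justify the phrase ``less valuable faster than less populated'', I will compare relative rates. Dividing by $U=\alpha P^\beta$ gives
\begin{equation}
\frac{\dot U}{U}=\beta\,\frac{\dot P}{P}.
\end{equation}
So for any $\beta>1$, and in particular the Metcalfe case, the proportional loss rate of value is $\beta$ times the proportional loss rate of population during decline, when $\dot P<0$. On the exponential branch of Eq.~\eqref{eq:biphasic} this gives $U(t)=\alpha P_{\text{peak}}^\beta e^{-\beta\mu(t-t_{\text{peak}})}$, which is decay at rate $\beta\mu$, faster than $\mu$.

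The main obstacle is interpretive rather than computational, and I will handle it with stated caveats. First, ``accelerates proportionally to the remaining population'' is literally the statement that $|\dot U|=2\alpha P|\dot P|$. The factor $P$ actually shrinks as $P\to0$, so the correct claim concerns relative rates, which the comparison above supplies. Second, differentiability of $P$ fails at $t_{\text{peak}}$ and at the collapse time $T_c$ of Proposition~\ref{thm:cascade}. For $\beta<1$ the factor $P^{\beta-1}$ blows up near $P=0$. I will therefore restrict the identity to open intervals where $P>0$ and $P$ is differentiable, and state the ``faster'' conclusion only for $\beta>1$.
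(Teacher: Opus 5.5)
Your proposal is correct and takes the same approach as the paper: the paper gives no separate proof and simply presents the identity as the chain rule applied to $U=\alpha P^{\beta}$. Your relative-rate comparison $\dot U/U=\beta\,\dot P/P$, which supports the ``faster'' claim only for $\beta>1$, is a sound sharpening of the paper's informal gloss, which the paper leaves unjustified; your caveats about integer-valued $P$, the shrinking factor $P$ as $P\to 0$, and non-differentiability at $t_{\text{peak}}$ and $T_c$ are also accurate.
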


\noindent\textit{Supporting evidence.} The nonlinear character of network externalities in MMORPGs has been examined empirically using online product ratings, with results consistent with the pattern our axiom describes: strong network effects through the middle of the product lifecycle, falling off as the population declines \cite{b_liu2015}. Work on free-to-play titles has documented that the network effect is not always positive even during the active phase, since paying players can impose negative externalities on non-paying ones \cite{b_wu2013}. These effects are not captured by our single-parameter network utility function, and we flag that as a simplification.

%-------------------------------------------------------
\section{Uninhabited Runtime States}
\label{sec:ghost}

An inhabited online world is, at any given moment, a composition of two largely independent systems: (i) the computational \textit{machinery} of the game, NPCs, weather, physics, ambient audio, traffic, day--night cycles, economy simulations, and (ii) the social population of human players. These two systems are coupled during the active lifespan, but they can be decoupled: the machinery can run without players, and this decoupled state occurs naturally at two distinct points in the lifecycle, which we formalize in this section.

\begin{definition}[Uninhabited Runtime State]
A game $\mathcal{G}$ is in an \textbf{uninhabited runtime state} at time $t$ if $S(t) \neq \varnothing$ (server infrastructure is operational) and $P(t) = 0$ (no active players). That is, server-side machinery remains operational while public participation is absent.
\end{definition}

We distinguish two phenomenologically distinct uninhabited runtime states at opposite ends of the lifecycle.

\subsection{\texorpdfstring{$\Omega_0$}{Omega-0}: Ghost Machinery (The Prenatal State)}

Before a game is released to players, its servers may already be brought online for staff testing, certification, load testing, regulatory review, or staged deployment. During this window, server-side processes may already be active: NPC scripts can execute, weather systems can cycle, ambient state can update, and economy simulations can advance. Public participation, however, has not yet begun.

\begin{definition}[Ghost Machinery / $\Omega_0$]
\label{def:omega0}
The \textbf{Ghost Machinery} state, denoted $\Omega_0$, is the prenatal uninhabited runtime state: $S(t) \neq \varnothing$, $P(t) = 0$, and $t < t_{\text{launch}}$, where $t_{\text{launch}}$ is the first time at which public players are admitted.
\end{definition}

\noindent\textit{Illustration.} For a large online open-world release, pre-launch operation may involve a technically functioning runtime in which simulated systems are already executing while public access is still closed. We denote that prenatal configuration by $\Omega_0$.

\subsection{\texorpdfstring{$\Omega_1$--$\Omega_3$}{Omega-1--Omega-3}: The Post-Mortem Ghost States}

The classical Ghost Server State, as introduced informally in earlier work, occupies the other end of the lifecycle: the machinery continues to run after the active population has collapsed. We formalize three substates.

\begin{definition}[Post-Mortem Ghost State Taxonomy]
\label{def:omega}
\hfill
\begin{itemize}
    \item $\Omega_1$ (\textbf{Dormant}): $S$ is active, $0 < P(t) < \Phi$, and $P(t)$ exhibits sporadic non-zero values. Characterized by intermittent visits from nostalgic players. \textit{Example:} \textit{New World} in 2026, with approximately 916 peak concurrent players from an original 913{,}027, a 99.9\% decline over 4.5 years, with matchmaking non-viable for most modes.

    \item $\Omega_2$ (\textbf{Comatose}): $S$ is active, $P(t) = 0$ for extended continuous intervals $\Delta t > \delta_{\text{coma}}$ (heuristically, $\delta_{\text{coma}} \geq 90$ days; this threshold is operational rather than derived). \textit{Example:} \textit{LawBreakers} in the period December 2017--September 2018, which recorded zero concurrent players on multiple occasions while servers continued operating.

    \item $\Omega_3$ (\textbf{Extinct}): $S = \varnothing$. Server infrastructure permanently decommissioned. This state is absorbing: $\Omega_3 \to \Omega_3$ for all subsequent time. \textit{Example:} \textit{Star Wars Galaxies} (December 15, 2011); \textit{The Crew} (April 1, 2024); \textit{Concord} (September 6, 2024, after 14 days of operation).
\end{itemize}
\end{definition}

\subsection{\texorpdfstring{Observer-Limited Equivalence of $\Omega_0$ and $\Omega_2$}{Observer-Limited Equivalence of Omega-0 and Omega-2}}

The states $\Omega_0$ and $\Omega_2$ sit at opposite ends of a game's life, separated by whatever length of active phase the game manages to sustain in between them. Despite this temporal distance, they share a structural property we want to make explicit. Under a suitably restricted observer model, the two states are indistinguishable from inside the simulation. We state this carefully, because it is easy to overstate.

\begin{proposition}[Observer-limited equivalence]
\label{prop:equivalence}
Define a \textbf{limited observer} as an agent placed within the simulated world with access only to in-simulation observables (NPC behavior, physics, ambient state, environmental rendering) and no access to external metadata (calendar date, player-count telemetry, server logs, deployment timestamps). Let $\omega_0 \in \Omega_0$ and $\omega_2 \in \Omega_2$ denote respective world-states. Then: under the assumption that the simulation software and world-state configuration are compatible across the two temporal instances, the instantaneous in-simulation observables of $\omega_0$ and $\omega_2$ are drawn from the same generative process and are not distinguishable by the limited observer from any finite-duration observation window.
\end{proposition}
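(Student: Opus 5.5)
The plan is to formalize the limited observer as a measurable function of the in-simulation observable process and reduce the claim to equality in law of two stochastic processes. Model the server-side machinery as a (possibly stochastic) dynamical system $X_{t+1} = F(X_t, \xi_t, u_t)$. Here $X_t$ is the world-state, $\xi_t$ is internal randomness (pseudo-random seeds, scheduler noise), and $u_t$ is the exogenous input stream, which consists solely of player actions. The observer sees $Y_t = h(X_t)$, where $h$ projects onto NPC behavior, physics, ambient state and rendering. External metadata such as the calendar, telemetry and logs are by hypothesis not in the range of $h$. A limited observer with window length $L$ is then any (possibly randomized) test $\varphi$ applied to $(Y_t, \dots, Y_{t+L})$. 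Indistinguishability means that every such $\varphi$ has equal acceptance probability under $\omega_0$ and $\omega_2$.

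The key steps, in order, are as follows. First, use the definitions: both $\Omega_0$ (Definition~\ref{def:omega0}) and $\Omega_2$ (Definition~\ref{def:omega}) require $S(t) \neq \varnothing$ and $P(t) = 0$. In $\Omega_2$ this holds on a continuous interval longer than $\delta_{\text{coma}}$, and in $\Omega_0$ on all of $[t, t_{\text{launch}})$. So on the observation window the player input stream is identically null, $u_t \equiv \varnothing$, in both cases. The dynamics therefore reduce to the autonomous map $X_{t+1} = F(X_t, \xi_t, \varnothing)$ with the same $F$ and the same law of $\xi$, which is exactly the content of the compatibility assumption on the software. Second, read the compatibility assumption on world-state configuration as saying that the initial distributions of $X$ at the start of the window, pushed forward by $h$ together with the autonomous dynamics, agree. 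Third, conclude that the observable processes $(Y_t)$ under $\omega_0$ and $\omega_2$ have the same finite-dimensional distributions, so that for any finite $L$ and any test $\varphi$, $\Pr_{\omega_0}[\varphi = 1] = \Pr_{\omega_2}[\varphi = 1]$. This is precisely the claimed indistinguishability. The instantaneous statement is the special case $L = 0$.

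The main obstacle is the second step. A post-mortem world is not generically identical to a prenatal one. Player-caused residue such as built structures, depleted resources, altered economy state, and accumulated NPC state is visible through $h$ and would let an observer distinguish the two. I would handle this by making the compatibility hypothesis explicit. Either $h$ factors through a sub-state $X^{\mathrm{amb}}$ that is reset, or is never written, by player input; or the relevant marginal of $X$ at window start has the same law in both states. Under the first reading the argument goes through cleanly, since $X^{\mathrm{amb}}$ evolves autonomously with no memory of $u$.

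A secondary subtlety is in-world clocks, for example a day–night cycle keyed to wall time. I would treat such clocks as external metadata, or note that they induce identical stationary observable laws provided the window start is uniformly randomized relative to the cycle phase. I would flag both points in a remark as delimiting exactly where the equivalence can fail.
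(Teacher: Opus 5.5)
This is essentially the paper's own argument, made precise. The paper's sketch only observes that both states run the same software and simulation rules over the same or compatible world state, so that the sole distinguishing factor is temporal position relative to $t_{\text{launch}}$ and $T^*$, which is external metadata excluded by the observer model. Your chain (null input on the window, identical autonomous dynamics, equal initial law, hence equal finite-dimensional laws of $(Y_t)$ and equal acceptance probability for every finite-window test) is a formal rendering of exactly that idea.

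Your remarks on visible player residue and wall-clock-keyed cycles usefully spell out what the paper leaves implicit in the word ``compatible''. One caveat: in your first reading, autonomy of $X^{\mathrm{amb}}$ alone does not suffice. That sub-state must also be dynamically closed (its update must not read player-written components of $X$), and it must have the same law at both window starts, for example by being stationary. Adding those conditions brings you back to your second reading.
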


\begin{proof}[Sketch]
The machinery in both states is the same software executing the same simulation rules over the same (or compatible) world state. The only distinguishing factor is the temporal position of the observation relative to $t_{\text{launch}}$ and $T^*$, which is external metadata excluded by the observer model. The result is a stylized consequence of the observer restriction, not a deep mathematical theorem; its value is in naming the structural symmetry rather than in proving a non-obvious fact.
\end{proof}

\begin{remark}
The equivalence is useful mainly as a naming device. It highlights that pre-launch and post-decline uninhabited runtimes can appear identical from within the simulation even though they differ strongly in historical context and preservation significance.
\end{remark}

\subsection{The Extended State Progression}

Incorporating $\Omega_0$, the full lifecycle progression becomes:
\begin{equation}
\Omega_0 \to \textit{Active} \to \Omega_1 \to \Omega_2 \to \Omega_3
\label{eq:full_progression}
\end{equation}

\begin{proposition}[Typical Lifecycle Progression]
The transitions in Eq.~\eqref{eq:full_progression} are unidirectional under standard economic conditions. The transition $\Omega_0 \to \textit{Active}$ is triggered exogenously at $t_{\text{launch}}$; subsequent transitions are driven by the population dynamics of Section~\ref{sec:population}. Reverse transitions $\Omega_i \to \Omega_j$ for $j < i$ (with $i, j \geq 1$) require exogenous intervention of increasing magnitude.
\end{proposition}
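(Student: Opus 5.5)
The plan is to treat the statement as three separate claims and check each against the model already in place. (a) The transition $\Omega_0 \to \textit{Active}$ is exogenous. (b) The forward transitions $\textit{Active} \to \Omega_1 \to \Omega_2 \to \Omega_3$ are produced by the dynamics of Section~\ref{sec:population}. (c) Reverse transitions among $\Omega_1,\Omega_2,\Omega_3$ cannot occur without intervention, and the intervention required grows with the index. Throughout, ``standard economic conditions'' will be read as the absence of the extraordinary interventions excluded in Axiom~\ref{ax:infra} and in the Irreversibility remark: no relaunch, no free-to-play reset, no community transfer.

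Claim (a) follows from Definition~\ref{def:omega0}. $\Omega_0$ requires $t<t_{\text{launch}}$ and $P(t)=0$, and $t_{\text{launch}}$ is set by the publisher, not by $P$. The first admission of players is therefore an external event. Once $t\ge t_{\text{launch}}$, the condition $t<t_{\text{launch}}$ can never hold again, so $\Omega_0$ is never re-entered.

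Claim (b) proceeds in three steps.
\begin{itemize}
\item $\textit{Active}\to\Omega_1$: after the peak, Eq.~\eqref{eq:biphasic} or Eq.~\eqref{eq:weibull_decay} is strictly decreasing to $0$, and $\Phi>0$ is finite by the Existence proposition. Continuity then gives a first time $t_\Phi$ with $P(t_\Phi)=\Phi$, after which $0<P<\Phi$. That is the definition of $\Omega_1$.
\item $\Omega_1\to\Omega_2$: Proposition~\ref{thm:cascade} applies at $t_\Phi$ and gives a finite $T_c$ with $P(T_c)=0$. Under the sub-critical ODE, $0$ is a fixed point, since $dP/dt=-\alpha\Phi^\gamma P^{1-\gamma}$ has no inflow. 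So $P\equiv 0$ on $[T_c,\infty)$. As long as $S\neq\varnothing$, this zero interval eventually exceeds $\delta_{\text{coma}}$, which places the system in $\Omega_2$.
\item $\Omega_2\to\Omega_3$: Axiom~\ref{ax:infra} supplies a finite $T_S$ at which $S=\varnothing$. Definition~\ref{def:omega} declares $\Omega_3$ absorbing.
\end{itemize}
Two ordering cases need care. If $T_S<T_c+\delta_{\text{coma}}$, the chain skips $\Omega_2$, or even $\Omega_1$, and jumps directly to $\Omega_3$. I would state this explicitly: such skipped transitions are still forward, so unidirectionality is preserved. This is also why the statement says ``typical'' progression.

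For claim (c), I would rank reversals by which part of the model they have to override. Leaving $\Omega_3$ requires $S$ to become nonempty again. That contradicts Axiom~\ref{ax:infra} unless there is relicensing, transfer, or relaunch, and then the dynamics must also restart from $P=0$. Leaving $\Omega_2$ requires re-seeding $P>0$ against the absorbing zero state of the ODE. That needs an external injection of players, but the servers already exist. Leaving $\Omega_1$ for \textit{Active} requires pushing $P$ back above $\Phi$. By the Irreversibility remark this is a reset of the growth dynamics, and it starts from a positive population. The chain of requirements
\[
\{\text{lift }P\text{ above }\Phi\}\subset\{\text{inject players into }P=0\}\subset\{\text{restore }S\text{ and inject players}\}
\]
is how I would make ``increasing magnitude'' precise. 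Measured as the minimum population injection plus infrastructure restoration needed, the cost is monotone in $i$.

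The main obstacle is (c). ``Magnitude of intervention'' has no formal definition in the paper, so I would fix one, namely the nested set of required interventions above, rather than a numerical cost. I would flag that the monotonicity claim is only as strong as that choice.
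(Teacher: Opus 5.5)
\paragraph{Gap in part (c).} The proposition concerns only $\Omega_i\to\Omega_j$ with $i,j\ge 1$, namely $\Omega_2\to\Omega_1$, $\Omega_3\to\Omega_2$ and $\Omega_3\to\Omega_1$.

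\begin{itemize}
\item Your chain starts with $\Omega_1\to\textit{Active}$, which is outside that range, and it never treats $\Omega_3\to\Omega_2$.
\item The first inclusion fails as a comparison of magnitudes. By your own account, leaving $\Omega_2$ for $\Omega_1$ needs only some $0<P<\Phi$, i.e.\ a few returning players. Lifting $P$ above $\Phi$ can need tens of thousands (van Dongen's 58{,}000). So $\{\text{lift }P\text{ above }\Phi\}\subset\{\text{inject players into }P=0\}$ holds only if every reversal ends in \textit{Active}, with required injections $\Phi-P$, $\Phi$, and $\Phi$ plus restoring $S$. That is not the claim.
\item For the transitions actually named, the requirement sets are $\{\text{inject}\}$, $\{\text{restore }S\}$ and $\{\text{restore }S,\text{inject}\}$. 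The last dominates the other two. But $\Omega_3\to\Omega_2$ and $\Omega_2\to\Omega_1$ are incomparable under inclusion, and ``injection plus restoration'' has no common unit.
\end{itemize}

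So monotonicity in $i$ needs an additional assumption, e.g.\ a lexicographic order in which restoring infrastructure outweighs any population injection. That is exactly what the paper supplies, and it does so empirically rather than by derivation. Its only support for the proposition is the remark that reverse transitions are rare, and that the known ones (\textit{Evolve}'s free-to-play reboot, the \textit{Final Fantasy XIV} rebuild, SWGEmu) required corporate reinvestment or community reverse-engineering at scale. State the assumption explicitly and cite that evidence for it, instead of presenting the nesting as a consequence of the model.

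\paragraph{Parts (a) and (b).} These go further than the paper, which derives nothing there, and they work after two corrections.

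\begin{itemize}
\item $0$ is not a fixed point of $dP/dt=-\alpha\Phi^{\gamma}P^{1-\gamma}$. Since $\gamma>1$, the right-hand side diverges as $P\to 0^{+}$ and is undefined at $P=0$, so the ODE cannot be continued past $T_c$. Setting $P\equiv 0$ on $[T_c,\infty)$ is a stipulation, justified by the absence of any arrival term, and you should present it that way.
\item Your $\Omega_2\to\Omega_3$ step rests on Axiom~\ref{ax:infra}, whose horizon $T_S$ is independent of $P$. So only $\textit{Active}\to\Omega_1\to\Omega_2$ is driven by the dynamics of Section~\ref{sec:population}. Decommissioning is as exogenous as launch, and claim (b) should say so.
\end{itemize}

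Your handling of skipped states is correct. It should also cover $P_{\text{peak}}<\Phi$ (the \textit{Concord} case), where \textit{Active} itself is skipped; every such skip is still forward.
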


\noindent\textit{Empirical evidence for reverse transitions.} The rarity of reverse transitions is itself evidence for the monotonicity of Eq.~\eqref{eq:full_progression}. Notable counter-examples include \textit{Evolve}'s free-to-play reboot, \textit{Final Fantasy XIV}'s complete rebuild, and community-operated private servers. SWGEmu has maintained a community-operated version of \textit{Star Wars Galaxies} since 2011 via reverse-engineered server code \cite{b_swgemu}. These cases are exceptional precisely because they require either corporate re-investment or community reverse-engineering at scale.

\begin{remark}
Miller and Garcia's ``Digital Ruins'' \cite{b_miller2019} provides the phenomenological account of the $\Omega_1$ state. Through autoethnographic exploration of abandoned virtual worlds (Blue Mars, Active Worlds, Twinity), they argue that digital ruins exist in ``a kind of eternal present'', they do not physically decay, yet evoke the affect of ruination. Their insight extends naturally to $\Omega_0$: the prenatal game world is also in an eternal present, but one that has not yet begun. Both states share the structural feature of being unobserved runtime; they differ only in temporal orientation relative to the population's active phase.
\end{remark}

%-------------------------------------------------------
\section{The Critical Mass Crossing Result}
\label{sec:inevitability}

We now present the principal conditional result of the framework.

\begin{definition}[Novelty Function]
The \textbf{novelty function} $\mathcal{N}(t)$ measures the gap between available unexperienced content and accumulated player exposure:
\begin{equation}
\mathcal{N}(t) = C(t) - \eta \cdot E(t)
\end{equation}
where $\eta > 0$ is a normalization constant relating content units to experience units.
\end{definition}

\begin{lemma}[Novelty Exhaustion]
\label{lem:novelty}
Assume the population is bounded below by some $P_{\min} > 0$ and mean session duration by $\bar{h}_{\min} > 0$ while $\mathcal{N}(t) > 0$. Under Axioms~\ref{ax:finite} and~\ref{ax:exposure}, there exists a finite time $t_{\mathcal{N}}$ such that $\mathcal{N}(t_{\mathcal{N}}) \leq 0$.
\end{lemma}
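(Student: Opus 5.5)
Argue by contradiction: suppose $\mathcal{N}(t) > 0$ for all $t \ge 0$. Then the lemma's standing hypothesis applies for all time, so $P(\tau) \ge P_{\min}$ and $\bar{h}(\tau) \ge \bar{h}_{\min}$ for every $\tau$. The plan is to show that, under this assumption, $C(t)$ grows strictly more slowly than $\eta E(t)$, which forces $\mathcal{N}$ to become negative.

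\textbf{Exposure grows at least linearly.} From the definition in Axiom~\ref{ax:exposure},
\begin{equation*}
E(t) = \int_0^t P(\tau)\,\bar{h}(\tau)\,d\tau \;\ge\; P_{\min}\,\bar{h}_{\min}\, t ,
\end{equation*}
so $\eta E(t) \ge \eta P_{\min}\bar{h}_{\min}\, t$. Write $\kappa := \eta P_{\min}\bar{h}_{\min} > 0$.

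\textbf{Content grows sublinearly.} Axiom~\ref{ax:finite} gives $c(t)\to 0$. So there is a time $t_1$ with $c(t) \le \kappa/2$ for all $t \ge t_1$. For $t \ge t_1$,
\begin{equation*}
C(t) = C(t_1) + \int_{t_1}^t c(\tau)\,d\tau \le C(t_1) + \tfrac{\kappa}{2}(t - t_1),
\end{equation*}
and $C(t_1) \le C(0) + c_{\max} t_1 < \infty$ by the bound $c \le c_{\max}$.

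\textbf{Combine the two bounds.} For $t\ge t_1$,
\begin{equation*}
\mathcal{N}(t) \le C(t_1) + \tfrac{\kappa}{2}(t-t_1) - \kappa t \;\le\; C(t_1) - \tfrac{\kappa}{2}\,t .
\end{equation*}
The right-hand side is $\le 0$ once $t \ge \max\{t_1,\, 2C(t_1)/\kappa\}$. Take $t_{\mathcal{N}}$ to be this value; it is finite, which contradicts the assumption that $\mathcal{N}$ stays positive. (Equivalently, $t_{\mathcal{N}}$ can be defined as the first time $\mathcal{N}\le 0$; it exists because $C$ and $E$ are absolutely continuous, so $\mathcal{N}$ is continuous.)

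\textbf{Main obstacle.} The delicate point is that the lower bounds $P_{\min}$ and $\bar{h}_{\min}$ are only assumed \emph{while} $\mathcal{N}>0$. The contradiction setup handles this, since under it the bounds hold globally. The other step needing care is justifying $C(t) = C(0) + \int_0^t c$ and the local integrability of $P\bar{h}$. Here I would simply assume that $C$ is absolutely continuous with derivative $c$ and that $P\bar{h}$ is measurable, as the axioms implicitly presuppose. Note that only $c(t)\to 0$ is actually needed; the bound $c_{\max}$ just keeps $C(t_1)$ finite.
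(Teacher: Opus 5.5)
Your proof is correct. It shares the paper's skeleton (contradiction plus the linear lower bound $E(t)\ge P_{\min}\bar{h}_{\min}t$) but handles content growth differently, and the difference matters.

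\textbf{The paper's route.} The paper asserts that Axiom~\ref{ax:finite} bounds $C(t)$ by a finite $C_\infty$. Then $\mathcal{N}>0$ would force $E(t)<C_\infty/\eta$, which contradicts linear growth of $E$. That boundedness does not follow from the axiom as stated. For example, $c(t)=1/(1+t)$ satisfies $c\le c_{\max}$ and $c\to 0$, yet $C(t)=C(0)+\ln(1+t)$ is unbounded.

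\textbf{Your route.} You compare growth rates: $C$ eventually grows with slope at most half of $\eta P_{\min}\bar{h}_{\min}$, while $\eta E$ grows with slope at least $\eta P_{\min}\bar{h}_{\min}$. This uses only what the axiom literally provides, so your argument is the more rigorous one. It also goes through under the weaker condition $\limsup_{t\to\infty}c(t)<\eta P_{\min}\bar{h}_{\min}$, i.e.\ content production eventually slower than the exposure floor.

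\textbf{What each buys.} If one grants the extra hypothesis $\int_0^\infty c(t)\,dt<\infty$, the paper's route is shorter. Its later references to $C_\infty$ suggest it intends this hypothesis. Under it, the paper's argument gives a uniform quantitative estimate $t_{\mathcal{N}}\le C_\infty/(\eta P_{\min}\bar{h}_{\min})$. Your $t_1$, by contrast, depends on an unquantified rate at which $c\to 0$.

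One cosmetic point: $\kappa$ is already the paper's novelty sensitivity coefficient in Eq.~\eqref{eq:mu_coupling}, so use a different symbol for $\eta P_{\min}\bar{h}_{\min}$.
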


\begin{proof}
Assume, for contradiction, that $\mathcal{N}(t) > 0$ for every $t$. Then $C(t) > \eta E(t)$ for all $t$, so $E(t) < C(t)/\eta$. By Axiom~\ref{ax:finite}, $C(t)$ is bounded above by some finite number $C_{\infty}$, hence $E(t) < C_{\infty}/\eta$ for all $t$.

But under the hypothesis that novelty remains while the active population and mean session duration are bounded below, cumulative exposure satisfies
\begin{equation}
E(t) = \int_0^t P(\tau)\,\bar{h}(\tau)\,d\tau \geq P_{\min}\,\bar{h}_{\min}\,t,
\end{equation}
which grows without bound as $t$ increases. This contradicts the boundedness implied by $C_{\infty}/\eta$. Therefore there exists a finite time $t_{\mathcal{N}}$ at which $\mathcal{N}(t_{\mathcal{N}}) \leq 0$.
\end{proof}

\begin{remark}
The floor-bound hypothesis is mild: while the game still offers unexperienced content, it is implausible that active population and session duration decay to zero. The condition characterizes the pre-decay phase of the biphasic model.
\end{remark}

\begin{definition}[Engagement Decay Coupling]
The decay rate $\mu(t)$ in Eq.~\eqref{eq:biphasic} is coupled to the novelty function via:
\begin{equation}
\mu(t) =
\begin{cases}
0 & \mathcal{N}(t) > 0 \\
\mu_0 \cdot \left(1 + \kappa \cdot |\mathcal{N}(t)|\right) & \mathcal{N}(t) \leq 0
\end{cases}
\label{eq:mu_coupling}
\end{equation}
where $\mu_0 > 0$ is the baseline decay rate and $\kappa > 0$ is the \textbf{novelty sensitivity coefficient}.
\end{definition}

\begin{proposition}[Critical-mass crossing under bounded novelty or finite service horizon]
\label{prop:inevitability}
Consider an online game in the modeled class under a fixed operational profile. Suppose either:
\begin{enumerate}
    \item the official service has a finite horizon $T_S$ as in Axiom~\ref{ax:infra}, or
    \item novelty becomes exhausted in finite time and thereafter induces persistent positive decline hazard without exogenous relaunch, community takeover, or regime change.
\end{enumerate}
Then there exists a finite time $T^*$ such that $P(T^*)<\Phi$.
\end{proposition}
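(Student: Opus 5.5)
The proof splits into the two cases of the statement, and in each case I will show that $P$ is eventually driven below any fixed positive level, in particular below $\Phi$ (which is positive and finite by the existence proposition for $\Phi$).

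\emph{Case 1 (finite service horizon).} By Axiom~\ref{ax:infra}, the official service is discontinued at some finite $T_S$. In the modeled class, participation requires the official service, since private-server continuation and community transfer are excluded by hypothesis. So for $t>T_S$ the system is in $\Omega_3$, with $S=\varnothing$ and hence $P(t)=0$. To respect the activity-window definition of an active player, I will take $T^*=T_S+\delta+\epsilon$ for any $\epsilon>0$; by then no player can have initiated a session within $[T^*-\delta,T^*]$. This gives $P(T^*)=0<\Phi$. The only care needed is to make this convention explicit: players present at $T_S$ still count as active for up to $\delta$ afterwards.

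\emph{Case 2 (novelty exhaustion).} Assume the hypothesis directly, or derive it from Lemma~\ref{lem:novelty}: there is a finite $t_{\mathcal N}$ with $\mathcal N(t)\le 0$ from then on, absent relaunch.
\begin{enumerate}
\item \textbf{Positive hazard.} For $t\ge t_{\mathcal N}$, Eq.~\eqref{eq:mu_coupling} gives $\mu(t)\ge\mu_0>0$. More generally, the hypothesis of a persistent positive decline hazard means $\mu(t)\ge\underline\mu>0$.
\item \textbf{Exponential upper bound.} Writing the decay phase as $dP/dt\le-\mu(t)P$, Gr\"onwall's inequality gives
\[
P(t)\le P(t_{\mathcal N})\,e^{-\underline\mu(t-t_{\mathcal N})}.
\]
Here $P(t_{\mathcal N})\le K<\infty$ by the biphasic model, or by any finite bound on $P$.
\item \textbf{Explicit crossing time.} Solving $P(T^*)<\Phi$ gives
\[
T^*=t_{\mathcal N}+\underline\mu^{-1}\ln\!\bigl(P(t_{\mathcal N})/\Phi\bigr)+1,
\]
or $T^*=t_{\mathcal N}$ if $P(t_{\mathcal N})<\Phi$ already.
\item \textbf{Aftermath.} Proposition~\ref{thm:cascade} then shows that once $P<\Phi$, collapse to zero follows in finite time. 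This is not needed for the claim itself, but it links the result to the $\Omega_1\to\Omega_3$ progression.
\end{enumerate}

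\emph{Main obstacle.} The hardest part is making Case 2 rigorous. Eq.~\eqref{eq:biphasic} was written with a constant $\mu$, so I must reinterpret the decay phase as the time-varying ODE $\dot P=-\mu(t)P$. I must also argue that "persistent positive hazard" means a uniform positive lower bound, not merely positivity. If $\mu(t)>0$ but $\int^\infty\mu=\infty$ fails, $P$ could plateau above $\Phi$. My plan is therefore to state the hypothesis as $\int_{t_{\mathcal N}}^\infty\mu(\tau)\,d\tau=\infty$, which the coupling in Eq.~\eqref{eq:mu_coupling} guarantees via $\mu\ge\mu_0$. There is a second issue: Lemma~\ref{lem:novelty} assumes a population floor $P_{\min}$, which is compatible with the subsequent decline only because the floor is required merely while $\mathcal N>0$.
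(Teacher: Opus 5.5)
Your proof is correct and uses the same two-case structure as the paper's. Your finishing argument in Case~2, however, differs from the paper's and is the sounder of the two.

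The paper concludes from ``$P$ is decreasing, bounded below by zero, and $\Phi>0$'' that $P$ falls below $\Phi$. As your main-obstacle paragraph anticipates, monotonicity and boundedness only give convergence to some limit, and that limit may lie at or above $\Phi$. Your route closes this gap. You take the uniform bound $\mu(t)\ge\mu_0$ from Eq.~\eqref{eq:mu_coupling} (since $1+\kappa|\mathcal N|\ge 1$ whenever $\mathcal N\le 0$), then apply Gr\"onwall to get $P(t)\le P(t_{\mathcal N})e^{-\mu_0(t-t_{\mathcal N})}$. This forces $P\to 0$ and gives an explicit $T^*$, which is what the statement actually needs. As you note, $\int_{t_{\mathcal N}}^\infty\mu=\infty$ would already suffice.

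Your Case~1 shift to $T_S+\delta+\epsilon$ is also more careful than the paper's ``$P(T_S)=0$''. Under the activity-window definition, players remain counted as active for up to $\delta$ after shutdown.

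One overstatement to drop: Lemma~\ref{lem:novelty} gives $\mathcal N(t_{\mathcal N})\le 0$ only at a single time, not from then on. Further content releases can push $C$ back above $\eta E$, as in the \emph{World of Warcraft} sawtooth. So the persistence of $\mathcal N\le 0$, and hence of $\mu\ge\mu_0$, must be taken from hypothesis~(2) itself. The paper's word ``thereafter'' glosses over the same point.
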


\begin{proof}
Under condition (1), the result is immediate because official service discontinuation implies $P(T_S)=0<\Phi$. Under condition (2), Lemma~\ref{lem:novelty} yields a finite time $t_{\mathcal{N}}$ with $\mathcal{N}(t_{\mathcal{N}})\leq 0$, and Eq.~\eqref{eq:mu_coupling} then implies a strictly positive decay hazard thereafter. Since $P(t)$ is decreasing and bounded below by zero, and since $\Phi>0$, there exists a finite crossing time $T^*$ at which the trajectory first falls below $\Phi$.
\end{proof}

\begin{corollary}[Scope of the critical-mass crossing result]
Under Axioms~\ref{ax:finite}--\ref{ax:infra} and the floor-bound hypothesis of Lemma~\ref{lem:novelty}, the proposition applies to online games in the modeled class. Extensions to games with unbounded user-generated content, bot-augmented populations, or community-operated post-shutdown infrastructure require modification of the axioms and are outside the present scope.
\end{corollary}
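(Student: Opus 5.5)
The corollary is a scoping statement: it adds no new mathematical content beyond Proposition~\ref{prop:inevitability}. My plan is therefore to show that, for a game in the modeled class, the standing axioms together with the floor-bound hypothesis discharge one of the two alternative hypotheses of that proposition, and then simply invoke it. I would split into two branches according to which hypothesis is easier to verify for the game at hand.

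In the first branch I would use Axiom~\ref{ax:infra}. For a game whose official service is publisher-supported and receives no extraordinary intervention, there is a finite $T_S$ at which the service is discontinued. The one point to argue explicitly is that $S=\varnothing$ forces $P(T_S)=0$. This follows from the definition of an active player, since no interactive session can be initiated once official infrastructure is gone. The consequence is immediate only when $\delta$ is taken small or $P$ is read as concurrent population, and I would note that subtlety. Condition~(1) of Proposition~\ref{prop:inevitability} then holds, and the conclusion follows.

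In the second branch, where no horizon is invoked, I would verify condition~(2). Axiom~\ref{ax:finite}, read as giving integrable $c$ so that $C$ is bounded, and Axiom~\ref{ax:exposure}, together with the floor-bound hypothesis, are exactly the hypotheses of Lemma~\ref{lem:novelty}. That lemma yields a finite $t_{\mathcal N}$ with $\mathcal N(t_{\mathcal N})\le 0$. I then need novelty to \emph{stay} non-positive so that the coupling Eq.~\eqref{eq:mu_coupling} gives a persistent hazard of at least $\mu_0>0$.

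This persistence step is the main obstacle. Because $C$ may still grow after $t_{\mathcal N}$, I would show that $\eta E$ outpaces $C$. As long as $P$ stays at least $\Phi$ (which is exactly what we are trying to refute), $E$ grows at rate at least $\Phi\,\bar h_{\min}$. Meanwhile $C(t)-C(t_{\mathcal N})\to$ a finite limit, since $c\to 0$ and $C$ is bounded. So $\mathcal N$ is eventually non-positive for all later times. I would then argue by contradiction: if $P(t)\ge\Phi$ for all $t$, the decay law gives $P(t)\le P(t_1)e^{-\mu_0(t-t_1)}\to 0$, which contradicts $\Phi>0$ from the Existence-of-$\Phi$ proposition.

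Finally, I would state that games with unbounded user-generated content, bots, or private servers violate Axiom~\ref{ax:finite}, the population meaning of $P$, or Axiom~\ref{ax:infra} respectively. The argument therefore does not apply to them, which is the exclusion clause of the corollary.
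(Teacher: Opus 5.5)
Your two branches are the paper's argument: the corollary has no separate proof, it is just Proposition~\ref{prop:inevitability} read under the standing axioms, and that proposition's proof splits into the same two cases. Branch~1 alone already proves the corollary. Axiom~\ref{ax:infra} is one of the corollary's hypotheses, and the modeled class is official-service populations without exogenous resets, so condition~(1) holds for every game in scope. There is never a game ``at hand'' that forces you into Branch~2. Your activity-window caveat is easy to close: for any $T^*>T_S+\delta$, no session can have been initiated in $[T^*-\delta,T^*]$, so $P(T^*)=0<\Phi$.

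Branch~2, if you keep it, has a genuine gap in the persistence step. You claim that while $P\ge\Phi$, $E$ grows at rate at least $\Phi\,\bar h_{\min}$. But the floor-bound hypothesis of Lemma~\ref{lem:novelty} gives $\bar h\ge\bar h_{\min}$ only \emph{while} $\mathcal N(t)>0$. After $t_{\mathcal N}$ you have no session-length floor, so $E$ may stall while $C$ keeps creeping upward. Nothing in your argument then stops $\mathcal N$ from becoming positive again, which switches the hazard in Eq.~\eqref{eq:mu_coupling} off. The bound $P(t)\le P(t_1)e^{-\mu_0(t-t_1)}$ is therefore unsupported.

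The repair is to drop persistence and argue by measure:
\begin{itemize}
\item On $\{t\ge t_{\mathcal N}:\mathcal N(t)>0\}$ you have $E'\ge P_{\min}\bar h_{\min}$, and elsewhere $E'\ge 0$.
\item If this set had infinite measure, $\eta E$ would eventually exceed $\sup C$. Then $\mathcal N$ would be negative forever after, a contradiction.
\item So the set has finite measure, and $\mu\ge\mu_0$ on a set of infinite measure in $[t_{\mathcal N},\infty)$.
\item Hence $\int_{t_{\mathcal N}}^\infty\mu=\infty$ and $P\to 0$.
\end{itemize}
Alternatively, note that condition~(2) of the proposition simply \emph{assumes} a persistent hazard, which the paper never derives.

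Two of your observations are correct and point to genuine gaps in the paper's own proofs. First, Axiom~\ref{ax:finite} as stated does not bound $C$ (e.g.\ $c(t)=1/(1+t)$). Second, an explicit hazard floor is needed, because ``decreasing and bounded below'' alone does not force a crossing of $\Phi$.
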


\begin{remark}
This proposition is intentionally narrower than a universal inevitability theorem. It applies to official-service populations in the absence of major exogenous resets such as relicensing, free-to-play relaunches, community stewardship, or strong user-generated-content substitution.
\end{remark}

\begin{remark}[Social Contagion as Amplifier]
Empirical work on player churn suggests social contagion: when one player departs, the departure probability of connected players increases \cite{b_chen2017,b_garcia2013}. This amplifies decay beyond what the novelty function alone predicts. A full epidemiological extension is left to future work.
\end{remark}

%-------------------------------------------------------
\section{The Nostalgia Inversion Point}
\label{sec:nostalgia}

\begin{definition}[Nostalgia Inversion Point]
The \textbf{Nostalgia Inversion Point} $\psi$ is the time at which the cultural memory function permanently exceeds the active population:
\begin{equation}
\psi = \inf \left\{ t \;\middle|\; \mathcal{M}(t') > P(t') \;\;\forall\, t' \geq t \right\}.
\end{equation}
\end{definition}

\begin{proposition}[Existence of $\psi$ under slower memory erosion than population decline]
\label{prop:psi}
If active population declines on a materially faster timescale than cultural-memory erosion, then there exists a finite time $\psi$ after which $\mathcal{M}(t) > P(t)$.
\end{proposition}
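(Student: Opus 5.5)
The approach is to make the informal hypothesis ``materially faster timescale'' precise and then compare the two functions through their ratio. Concretely, I will assume that after some reference time $t_1$ (for instance $t_{\text{peak}}$, or $T^*$ from Proposition~\ref{prop:inevitability}) there are constants $\mu_P > \mu_M \geq 0$ and $A, B > 0$ with
\begin{equation}
P(t) \leq A\, e^{-\mu_P (t - t_1)}, \qquad \mathcal{M}(t) \geq B\, e^{-\mu_M (t - t_1)} \quad (t \geq t_1).
\end{equation}
This is the formal reading of the hypothesis: population is bounded above by a decay envelope, and memory is bounded below by a strictly slower one. It matches the exponential branch of Eq.~\eqref{eq:biphasic}. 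It also covers the Weibull form~\eqref{eq:weibull_decay} with $k \geq 1$, using $\mu_P = 1/\theta$ once $t - t_{\text{peak}} \geq \theta$.

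The key step is the ratio bound
\begin{equation}
\frac{P(t)}{\mathcal{M}(t)} \leq \frac{A}{B}\, e^{-(\mu_P - \mu_M)(t - t_1)},
\end{equation}
which tends to $0$. Set $t_2 = t_1 + \ln(A/B)/(\mu_P - \mu_M)$, or $t_2 = t_1$ if $A \leq B$. For every $t' > t_2$ the ratio is strictly below $1$, so $\mathcal{M}(t') > P(t')$. Hence $t_2$ belongs to the set in the definition of $\psi$. The set is therefore nonempty and bounded above by a finite number, so its infimum $\psi$ exists and satisfies $\psi \leq t_2 < \infty$.

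There is an alternative route that avoids exponential assumptions. Use Proposition~\ref{thm:cascade}: once $P$ falls below $\Phi$ (guaranteed by Proposition~\ref{prop:inevitability}), $P$ hits $0$ at a finite $T_c$. The absorbing extinction state $\Omega_3$ then keeps $P \equiv 0$ afterwards. It would then suffice to assume $\mathcal{M}(t) > 0$ for all $t \geq T_c$, which is a weak ``slower erosion'' hypothesis, and this gives $\psi \leq T_c$. I would present this as a remark, since it replaces the rate comparison with the finite-time collapse.

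The main obstacle is not the calculation but the modeling content. The statement does not define ``materially faster timescale,'' and $\mathcal{M}$ has no dynamics specified anywhere in the paper. So the real work is choosing a hypothesis that is honest and also sufficient. A pure comparison of rates such as $\dot P/P < \dot{\mathcal{M}}/\mathcal{M}$ guarantees only that the ratio decreases, not that it drops below $1$. One must require either a uniform gap $\mu_P - \mu_M > 0$, as above, or eventual positivity of $\mathcal{M}$ combined with extinction of $P$. I will state the uniform-gap version explicitly in the proof so that the word ``permanently'' in the definition of $\psi$ is secured.
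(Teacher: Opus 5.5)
Your proof is correct. It rests on the same core idea as the paper's: a persistent gap in decay timescales forces memory to dominate eventually. Your execution, however, is genuinely tighter.

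The paper's proof is a two-sentence sketch. It takes ``declines toward zero faster'' as given, appeals to continuity of $P$ and $\mathcal{M}$ to produce a crossing time, and then appeals to ``persistence of the timescale separation'' for eventual dominance. You instead fix a quantitative reading of the hypothesis, namely exponential envelopes with a uniform gap $\mu_P-\mu_M>0$, and bound $P/\mathcal{M}$ directly.

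This buys two things. First, you get an explicit bound $\psi\le t_2$. Second, you remove the continuity step entirely, which is a real gain. The paper's definition of $\mathcal{G}$ makes $P$ take values in $\mathbb{Z}_{\ge 0}$, so $P$ is not continuous. In any case no crossing need exist: if $\mathcal{M}>P$ throughout, then $\psi=0$. So the crossing is not what the statement requires.

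Your observation that a pointwise comparison $\dot P/P<\dot{\mathcal{M}}/\mathcal{M}$ only makes the ratio decrease, possibly to a limit $\ge 1$, is also useful. It pins down exactly what ``persistence of the timescale separation'' must mean for the paper's sketch to go through. The cost is that your envelope hypothesis is one particular formalization of an informal condition. Weibull decay with $k<1$, for instance, admits no exponential envelope, and your second route only partly covers such cases.

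Three small repairs are needed.
\begin{itemize}
\item At $t'=t_2$ your bound only yields $P(t_2)\le\mathcal{M}(t_2)$, not strict inequality. Conclude instead that every $t>t_2$ lies in the defining set, hence $\psi\le t_2$.
\item That set is nonempty and contained in $\mathbb{R}_{\ge 0}$; it is not bounded above.
\item In the alternative route, $P(T_c)=0$ with servers still running is an uninhabited runtime state (eventually $\Omega_2$), not $\Omega_3$, which requires $S=\varnothing$. To keep $P\equiv 0$ after $T_c$, cite the irreversibility remark following the sub-critical collapse proposition. Reserve the absorbing-$\Omega_3$ argument for the finite-service-horizon case at $T_S$.
\end{itemize}
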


\begin{proof}
The claim follows whenever $P(t)$ declines toward zero faster than $\mathcal{M}(t)$ does. In that case, continuity of both functions implies at least one crossing time, and persistence of the timescale separation implies eventual dominance of memory over activity.
\end{proof}

\begin{remark}
This is a conditional comparative-timescale result, not a universal statement about all games. In particular, the ordering between $\psi$ and the critical-mass crossing time depends on title-specific rates of decline, archival visibility, and social transmission.
\end{remark}

\noindent\textit{Empirical support.} Xu and de Wildt \cite{b_xu2024} conducted multi-sited ethnography of the January 2023 \textit{World of Warcraft} China server shutdown, documenting grief across platforms and arguing that attachment to MMOs can generate forms of loss comparable to real-world bereavement. Dominguez \cite{b_dominguez2014} applied Mark Fisher's \textit{hauntology} to private servers for defunct MMOs, arguing they function as spaces haunted by unfulfilled promises of virtual utopia. The documentary \textit{Preserving Worlds} \cite{b_preservingworlds} provides visual evidence of games whose cultural significance exceeds their current population.

%-------------------------------------------------------
\section{Preservation Tension}
\label{sec:paradox}

\begin{definition}[Operational Cost and Cultural Value]
Let $\mathcal{C}_{\text{op}}(t)$ denote the operational cost of maintaining service availability and let $V_{\text{cult}}(t)$ denote the cultural value of $\mathcal{G}$ at time $t$.
\end{definition}

\begin{proposition}[Preservation tension in terminal states]
\label{prop:paradox}
In late-life states, it is possible for cultural value to remain stable or increase while commercial justification declines toward zero.
\end{proposition}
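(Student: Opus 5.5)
Because the proposition only claims that the described configuration \emph{can} occur, I will prove it by constructing a witness: a trajectory satisfying the axioms and Proposition~\ref{prop:inevitability}, together with admissible choices of $\mathcal{C}_{\text{op}}$ and $V_{\text{cult}}$, for which $V_{\text{cult}}$ is non-decreasing on a late-life interval while commercial justification tends to zero. To make ``commercial justification'' precise I will introduce a revenue proxy $R(t) = \lambda\, U(P(t))$ with $\lambda>0$. It is built from the network utility of Axiom~\ref{ax:network}, so the revenue available from the live population scales with the value that population derives. Commercial justification is then the ratio $J(t) = R(t)/\mathcal{C}_{\text{op}}(t)$. I will require $\mathcal{C}_{\text{op}}(t)\ge c_0>0$ while $S\neq\varnothing$, reflecting the fact that a running server has a fixed floor cost.

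The first step is the commercial side. By Proposition~\ref{prop:inevitability} there is a finite $T^*$ with $P(T^*)<\Phi$. Proposition~\ref{thm:cascade} then gives a finite $T_c$ with $P(t)\to 0$ as $t\to T_c$. Axiom~\ref{ax:network} gives $U(P)=\alpha P^\beta\to 0$, so $R(t)\to 0$. Since $\mathcal{C}_{\text{op}}\ge c_0$, it follows that $J(t)\le \lambda\alpha P(t)^\beta/c_0 \to 0$. This holds on the post-crossing interval, which corresponds to the $\Omega_1$ and $\Omega_2$ states of Definition~\ref{def:omega}.

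The second step is the cultural side. I will posit $V_{\text{cult}}(t) = g(\mathcal{M}(t), A(t))$, where $g$ is non-decreasing in both arguments and $A(t)$ is a non-decreasing archival or documentary stock. The case studies illustrate this stock, for example \cite{b_preservingworlds} and the ethnographic records \cite{b_xu2024}. The construction also takes $\mathcal{M}$ non-decreasing or slowly eroding on the interval. This is compatible with Proposition~\ref{prop:psi}, whose hypothesis is exactly that memory erodes on a slower timescale than population. A concrete witness is $\mathcal{M}(t)$ equal to cumulative distinct players, with no forgetting over the window, together with $A$ constant or increasing. Then $V_{\text{cult}}$ is stable or increasing on $[T^*, T_c]$ while $J\to 0$, which establishes the existence claim. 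I will also note that $\mathcal{M}(t)\ge P(t)$ here, so the window lies after $\psi$.

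The main obstacle is not a technical estimate. It is that $V_{\text{cult}}$ and commercial justification are undefined primitives; the scope section states that no measurement procedure for cultural value is given. A reader could object that the monotonicity of $V_{\text{cult}}$ is simply assumed. I will address this by stating the result as a consistency result: the axioms place no constraint linking $V_{\text{cult}}$ to $P$, so the witness is admissible. The only place the paper's results do real work is the decay $J\to 0$, via Propositions~\ref{prop:inevitability} and~\ref{thm:cascade}.
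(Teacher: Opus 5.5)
Your proposal is correct and follows the same idea as the paper's short argument. Revenue is tied to the active population, cultural value is tied to memory and archival importance (which the axioms leave unconstrained by $P$), and declining population drives economic return toward zero while cultural value persists. Your version simply makes this explicit as a witness construction with $J=R/\mathcal{C}_{\text{op}}$ and a cost floor, and derives $P\to 0$ from Propositions~\ref{prop:inevitability} and~\ref{thm:cascade}, whereas the paper just takes the population decline as given.
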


\begin{proof}[Argument]
If revenue scales with active population while cultural value depends partly on historical importance, scarcity, or memory, then declining active population reduces economic return even when the title's archival or cultural importance persists. The result is a tension between preservation value and service profitability.
\end{proof}

\noindent\textit{Policy implications.} Existing legal and institutional responses address only parts of this problem. In the United States, preservation exemptions under Section 1201 remain limited in scope and do not amount to a general remote-access preservation right \cite{b_dmca2024}. In Europe, the European Citizens' Initiative \textit{Stop Destroying Videogames} reflects growing pressure to treat online games as durable cultural goods rather than disposable services \cite{b_stopdestroyinggames}. A French consumer group has also filed suit against Ubisoft over the shutdown of \textit{The Crew} \cite{b_ubisuit}. These developments suggest that viability collapse is not only a technical and economic issue but also a preservation-governance issue.

\begin{definition}[Preservation Window]
The \textbf{Preservation Window} $\mathcal{W}$ is the interval
\begin{equation}
\mathcal{W} = [t_{\psi},\; t_{\Omega_3}],
\end{equation}
representing the period during which a game has acquired sufficient cultural significance to warrant preservation ($t > \psi$) but has not yet been irrecoverably decommissioned ($t < t_{\Omega_3}$).
\end{definition}

\noindent\textit{Remark.} The IGDA white paper on game preservation warned that digital games may disappear within decades \cite{b_igda}. Pew Research Center's 2024 study found that 38\% of webpages that existed in 2013 were inaccessible by 2023 \cite{b_pew2024}. Game servers, lacking even the passive persistence of static web pages, face an even sharper form of digital fragility.

%-------------------------------------------------------
\section{Illustrative Case Studies}
\label{sec:empirical}

We present case studies for five games spanning distinct lifecycle trajectories. These are intended as illustrations of the framework's vocabulary and constructs, not as rigorous empirical validation. We do not claim that the biphasic model has been fitted to these time series with statistical rigor; the decay parameters discussed below are order-of-magnitude estimates derived from publicly available concurrent-player summaries. A brief methods sketch for future validation appears at the end of this section.

\begin{table}[h]
\centering
\caption{Approximate Population Decay Parameters (Illustrative)}
\label{tab:empirical_halflife}
\begin{tabular}{@{}lccccc@{}}
\toprule
\textbf{Title} & $P_{\text{peak}}$ & $\tau$ (mo.) & \textbf{Decay} & \textbf{Terminal} \\
\midrule
LawBreakers & 7,571 & $\sim$2.1 & Exp. & $\Omega_3$ \\
H1Z1 & 151,331 & $\sim$3.4 & Exp. & $\Omega_1$ \\
Evolve & 27,403 & $\sim$4.8 & Biphasic & $\Omega_1$ \\
New World & 913,027 & $\sim$8.2 & Sawtooth & $\Omega_1$ \\
WoW & 12M sub. & $\sim$42 & Sawtooth & Active \\
\bottomrule
\end{tabular}
\end{table}

\subsection{LawBreakers: Canonical Rapid Extinction}

\textit{LawBreakers} reached its peak of roughly 7{,}571 concurrent players during closed beta in June 2017. At launch the concurrent population was around 3{,}019, and by early 2018 the game was recording zero concurrent players on Steam Charts. That is about five months from launch to what we have been calling the $\Omega_2$ state. The servers continued to operate for another eight months in this comatose condition before the final shutdown in September 2018. The decline during the decay phase is consistent with roughly exponential decay over a timescale of a few months, though we do not claim to have fit the curve rigorously. What is notable about \textit{LawBreakers} for our purposes is that the game traversed every stage of the uninhabited runtime taxonomy in roughly fifteen months, from active play through dormant and comatose states to final extinction. It is the cleanest empirical instance of the full lifecycle we are aware of.

\subsection{H1Z1: Competitive Displacement}

\textit{H1Z1} reached a peak concurrent population of around 151{,}331 in July 2017. Over the seven months that followed it lost approximately ninety-one percent of that population, as players migrated to \textit{PUBG} and \textit{Fortnite} in the rapidly expanding battle-royale market. This case does not fit the novelty-exhaustion story cleanly. The game did not die because its content ran out. It died because a better substitute appeared and the audience moved. We call this pattern competitive displacement, and it corresponds, in the language of our model, to a collapse in the carrying capacity $K$ driven by external substitution rather than by internal exposure. A proper formal treatment would require coupled equations modeling the flow of population between competing titles, along the lines of the competitive Lotka-Volterra equations from ecology. We do not attempt that extension here.

\subsection{Concord: Fastest AAA Extinction}

\textit{Concord} (August 2024) represents an extreme recent case: the game reached only hundreds of peak concurrent Steam players and was pulled offline after roughly two weeks. It likely never reached $\Phi$ for its intended 5v5 hero-shooter format; in the language of the framework, it entered a dormant-like failure state almost immediately and transitioned rapidly to $\Omega_3$, effectively bypassing the standard lifecycle.

\subsection{World of Warcraft: Sawtooth Decay}

\textit{World of Warcraft} reached its reported peak of approximately twelve million subscribers in the fourth quarter of 2010. The trajectory since then has been a distinctive sawtooth pattern. Each expansion produces a spike in population as old players return and new ones try it, followed by a decay back down to a new baseline, typically below the trough that preceded the spike. Blizzard stopped reporting subscriber counts in 2015 at around five and a half million. In the language of our framework, each expansion can be read as a partial reset of the novelty function, temporarily pushing it back above zero and delaying the decay phase, but with diminishing amplitude as total bounded content is progressively consumed. \textit{World of Warcraft} is in many ways the longest controlled experiment in live-service content production we have, and its sawtooth pattern is what happens when the content pipeline works hard but cannot quite outrun exposure forever.

\subsection{Star Wars Galaxies: Shock-Induced Decay}

\textit{Star Wars Galaxies} had an unusual failure mode. In November 2005 the game received a controversial update called the New Game Enhancements, and its subscriber base fell discontinuously from around two hundred fifty thousand to roughly one hundred thousand in a short window. This is not the smooth decay our model describes. It is a shock, an exogenous jump in the departure rate, driven by a specific design decision that alienated a large fraction of the existing player base at once. The subsequent six years of slow decline to the eventual shutdown in December 2011 then followed a more standard exponential pattern. What makes \textit{Star Wars Galaxies} interesting beyond its shutdown is what happened afterward. The community project SWGEmu has maintained playable servers for the pre-NGE version of the game since 2011, using reverse-engineered server code under a legally grey arrangement. This constitutes a reverse transition from the extinct state $\Omega_3$ back into a form of dormant state, and it demonstrates that community preservation can extend the existence of a game indefinitely, though always in a transformed social condition: smaller, legally precarious, and running a version of the game that nominally no longer exists.

\subsection{Genre Taxonomy of Digital Half-Lives}

The case studies above, together with survey data on player engagement, suggest that digital half-life varies systematically by genre. Table~\ref{tab:genre_halflife} presents order-of-magnitude estimates, intended as hypotheses for empirical validation rather than fitted parameters.

\begin{table}[h]
\centering
\caption{Hypothesized Digital Half-Life by Genre (Heuristic Estimates)}
\label{tab:genre_halflife}
\begin{tabular}{@{}lcc@{}}
\toprule
\textbf{Genre} & $\tau$ \textbf{(months)} & \textbf{Primary Decay Driver} \\
\midrule
Annual FPS franchise & 12--18 & Franchise cannibalization \\
Hero shooter & 6--24 & Competitor displacement \\
Battle Royale & 18--30 & Meta-fatigue \\
MMORPG (subscription) & 36--84 & Content exhaustion \\
MMORPG (free-to-play) & 24--60 & Monetization fatigue \\
Survival sandbox & 24--48 & Community fragmentation \\
Competitive MOBA & 48--96 & Skill-barrier ossification \\
Modding-enabled sandbox & 120+ & Mod-ecosystem decay \\
\bottomrule
\end{tabular}
\end{table}

\subsection{Methods Sketch for Future Validation}

We want to be explicit about what a serious empirical test of this framework would look like, because we have not carried one out and we do not want the case studies in this section to be mistaken for one. A proper validation would begin by collecting concurrent-player time series at daily or weekly resolution from public sources such as Steam Charts, SteamDB, or BattleMetrics, for a substantial sample of titles across genres. It would fit the biphasic model of equations \eqref{eq:biphasic} and \eqref{eq:weibull_decay} to each series via maximum-likelihood estimation, reporting bootstrap confidence intervals on the growth rate, peak time, decay rate, and Weibull shape parameter. It would compare the fitted model against several plausible alternatives, including pure exponential decay, power-law decay, and log-normal decay, using information criteria such as AIC or BIC to adjudicate between them. Where matchmaking telemetry is available, it would attempt to estimate $\Phi$ directly from queue times and match quality statistics rather than relying on anecdotal industry figures. And it would hold out a portion of the sample to test the fitted model's ability to predict decay parameters for unseen titles. None of this is done in the present paper. The case studies above are illustrations of the framework's vocabulary, not statistical validations of it.

\subsection{Validation Agenda}

A rigorous empirical evaluation of the framework would require more than illustrative case studies. At minimum, it should: (i) estimate $\Phi$, $\alpha$, and post-peak decay parameters from population or queue-related telemetry; (ii) compare the framework against exponential, Weibull, and power-law baselines; (iii) test held-out predictive performance on post-peak trajectories; and (iv) examine whether estimated threshold crossings correspond to observable degradations in queue times, match quality, or mode viability. The present paper does not perform those steps, and for that reason its empirical claims remain programmatic rather than confirmatory.

%-------------------------------------------------------
\section{Discussion and Implications}
\label{sec:conclusion}

\subsection{Relationship to Prior Work}

The framework developed here draws on several existing threads. Bartle's player-type ecology \cite{b_bartle1996} offered a qualitative treatment of game populations as ecosystems; we formalize this into quantitative population dynamics. Bauckhage's Weibull result \cite{b_bauckhage2012} characterized individual-level engagement; we adopt the Weibull form as an approximation for the aggregate population in the decay phase, acknowledging that a rigorous derivation from individual lifespans remains open. Garcia's $k$-core theory \cite{b_garcia2013} modeled cascading collapse in social networks; our sub-critical collapse proposition gives a game-specific analogue, with matchmaking degradation as the feedback mechanism. The Bass diffusion model \cite{b_bass1969} characterizes adoption; we use it as the growth phase of our biphasic model. The contribution of this paper is the integration of these strands into a single lifecycle framework with explicit formal assumptions and conditional results.

\subsection{For Game Design}

One practical implication of the critical mass crossing result is that graceful mortality ought to be a first-class concern in game architecture, not something left to be worked out after the population has already collapsed. A game that degrades well below $\Phi$, by continuing to offer meaningful single-player or small-group experiences in spaces originally designed for larger populations, can extend its cultural lifespan well beyond the point at which its social viability fails. The design moves that make this possible are not exotic. Adaptive matchmaking that relaxes its constraints as the pool shrinks, bot-driven backfill that maintains interactive density when real players are sparse, and basic offline functionality for content that does not strictly require the network are all within reach of current engineering practice. What is missing is the explicit design intent to support a game's afterlife.

\subsection{For Digital Preservation Policy}

The preservation tension described in this paper has the structure of a market failure and is therefore unlikely to be resolved by market incentives alone. The preservation window $\mathcal{W}$ gives us a formal criterion for the period during which intervention is both culturally justified and technically feasible. The window opens at $\psi$, when cultural memory begins to exceed active population, and closes when the infrastructure is permanently decommissioned. Current legal and regulatory frameworks, such as the DMCA exemption regime in the United States and the European Citizens' Initiative process, are reacting to this preservation problem without an explicit formal model of it. One practical suggestion the framework offers is that preservation efforts should target games in the dormant state, while recovery is still feasible, rather than waiting for the extinct state, at which point recovery requires reverse engineering or becomes impossible entirely.

\subsection{For Cultural Theory}

The nostalgia inversion point $\psi$ picks out a specific moment at which a game transitions from being primarily lived to being primarily remembered. Reinhard's \textit{Archaeogaming} \cite{b_reinhard2018} treats game-spaces as archaeological sites in their own right, and our framework supplies the population-dynamics context that explains when and why these sites become properly archaeological rather than merely underpopulated. This also connects naturally to existing cultural scholarship on abandoned virtual worlds, including Miller and Garcia on digital ruins \cite{b_miller2019} and Xu and de Wildt on grief over server shutdowns \cite{b_xu2024}. The present paper does not add to that literature directly, but it provides a formal vocabulary that cultural scholars can use to make their claims more precise if they wish to.

\subsection{Limitations and Future Work}

The framework has several real limitations that should be acknowledged explicitly. The population model treats the active player base as a single homogeneous group, whereas real game populations are stratified into subgroups with very different engagement patterns. A proper extension would incorporate heterogeneous subpopulations, along the lines Bartle sketched qualitatively \cite{b_bartle1996}, each with its own decay dynamics. The framework also treats each game in isolation, which misses competitive displacement effects of the kind \textit{H1Z1} exhibits. A multi-game extension using competitive Lotka-Volterra dynamics is the natural next step and would require its own paper. The role of artificial agents, whether in the form of bots that pad effective population counts or more sophisticated AI-driven NPCs that substitute for human players, is also absent from the current treatment and is likely to become important as AI systems capable of generating convincing in-game behavior become more widely deployed. User-generated content, which can push the effective content bound $C_\infty$ far beyond what studio-produced content alone would supply, is not modeled. Finally, community-run private servers as a resurrection mechanism after the extinct state are not captured by the current axioms, and formal treatment of them would require relaxing the assumption that $\Omega_3$ is absorbing. Each of these is future work, and each one, treated seriously, would change the shape of the formal results.

%-------------------------------------------------------
\section*{Disclosure of tool use}
The author used generative AI tools for editorial assistance and revision support during manuscript preparation. All claims, citations, mathematical statements, and final wording were reviewed and are the sole responsibility of the author.

%-------------------------------------------------------
\section*{Acknowledgments}
The author thanks the maintainers of publicly available Steam concurrent-player data archives, which informed the illustrative case studies in this work, and community-driven preservation efforts that help motivate the preservation questions studied here.

The author used generative AI tools for editorial assistance and revision support during manuscript preparation. All claims, citations, mathematical statements, and final wording were reviewed and are the sole responsibility of the author.

\end{document}